\pgfplotsset{compat=newest}
\newcommand\footnoteref[1]{\protected@xdef\@thefnmark{\ref{#1}}\@footnotemark}
\newtheorem{theorem}{Theorem}
\newtheorem{corollary}[theorem]{Corollary}
\newenvironment{mymatrix}{\begin{bmatrix}} {\end{bmatrix} }
\def\ve#1{{\mathchoice{\mbox{\boldmath$\displaystyle #1$}}%
              {\mbox{\boldmath$\textstyle #1$}}%
              {\mbox{\boldmath$\scriptstyle #1$}}%
              {\mbox{\boldmath$\scriptscriptstyle #1$}}}}
\renewcommand{\vec}[1]{\ensuremath{\boldsymbol{#1}}}
\newcommand{\Fq}{\ensuremath{\mathbb{F}_q}}
\newcommand{\Code}{\mathcal{C}}
\newcommand{\mycode}[1]{\ensuremath{\mathcal{#1}}}
\renewcommand{\H}{\ve{H}}
\renewcommand{\a}{\ve{a}}
\renewcommand{\b}{\ve{b}}
\newcommand{\e}{\ve{e}}
\renewcommand{\v}{\ve{v}}
\newcommand{\m}{\ve{m}}
\renewcommand{\c}{\ve{c}}
\renewcommand{\e}{\ve{e}}
\renewcommand{\m}{\ve{m}}
\newcommand{\G}{\ve{G}}
\renewcommand{\P}{\ve{P}}
\newcommand{\R}{\ve{R}}
\newcommand{\F}{\mathbb{F}}
\newtheorem{cor}{Corollary}
\newtheorem{const}[cor]{Construction}
\newtheorem{thmmystyle}{Theorem}
\newcommand{\removelatexerror}{\let\@latex@error\@gobble}
\newcommand*{\rom}[1]{\expandafter\@slowromancap\romannumeral #1@}
			\edef\arga{\thisrow{#1}}
			\edef\argb{#2}
\begin{document}

\title{Bounds and Code Constructions for \\ Partially Defect Memory Cells }
\author{\IEEEauthorblockN{Haider Al Kim$^{1,2}$\thanks{This work has received funding from the German Research Foundation (Deutsche Forschungsgemeinschaft, DFG) under Grant No. WA3907/1-1. H. Al Kim has received funding from the German Academic Exchange Service (Deutscher Akademischer Austauschdienst, DAAD) under the support program ID 57381412. S.~Puchinger received funding from the European Union's Horizon 2020 research and innovation program under the Marie Sklodowska-Curie grant agreement no.~713683.}, Sven Puchinger$^{3}$, Antonia Wachter-Zeh$^{1}$}

\IEEEauthorblockA{
  $^1$Institute for Communications Engineering, Technical University of Munich (TUM), Germany\\ $^2$Electronic and Communications Engineering, University of Kufa (UoK), Iraq\\
  $^3$Department of Applied Mathematics and Computer Science, Technical University of Denmark (DTU), Denmark\\
  Email: haider.alkim@tum.de, svepu@dtu.dk, antonia.wachter-zeh@tum.de}
  }

\maketitle

\begin{abstract}
	This paper considers coding for so-called \emph{partially stuck} memory cells.
	Such memory cells can only store partial information as some of their levels cannot be used due to, e.g., wear out.
	First, we present a new code construction for masking such partially stuck cells while additionally correcting errors.
	This construction (for cells with $q >2$ levels) is achieved by generalizing an existing masking-only construction in \cite{wachterzeh2016codes} (based on binary codes)
	to correct errors as well. Compared to previous constructions in \cite{haideralkim2019psmc}, our new construction achieves larger rates for many sets of parameters.
	Second, we derive a sphere-packing (any number of $u$ partially stuck cells) and a Gilbert-Varshamov bound ($u<q$ partially stuck cells) for codes that can mask a certain number of partially stuck cells and correct errors additionally.
	A numerical comparison between the new bounds and our previous construction of PSMCs for the case $u<q$ in \cite{haideralkim2019psmc} shows that our construction lies above the Gilbert--Varshamov-like bound for several code parameters.

\end{abstract}

\begin{IEEEkeywords}
flash memories, phase change memories, defect memory, (partially) stuck cells, defective cells error correction, sphere packing bound, Gilbert-Varshamov bound 
\end{IEEEkeywords}

\section{Introduction}
The demand for reliable memory solutions and in particular for non-volatile memories such as \emph{phase-change memories} (PCMs) for different applications is steadily increasing. These memories provide permanent storage, rapidly extendable capacity, and multi-levels devices. 
However, due to increasing the number of levels while decreasing the size of the memory, it is essential to suggest sophisticated coding and signal processing solutions to overcome reliability issues. The key characteristic of PCM cells is that they can switch between two main states: an amorphous state and a crystalline state. 
PCM cells may become \emph{defect} (also called \emph{stuck}) if they fail in switching their states. This occasionally happens due to the cooling and heating processes of the cells, and therefore cells can only hold a single phase \cite{gleixner2009reliability,kim2005reliability,lee2009study,pirovano2004reliability}. The crystalline state consists of multiple substates which motivates \emph{partially stuck} type of defects to happen.

In flash memories, an electric charge might be trapped in the cell at a certain state, and the cell's status cannot be switched to be writable again. The suggested mechanism to deal with these defect memory cells whose charges are trapped is called \emph{masking}. Masking determines a codeword that matches the stuck level of the stuck memory. Therefore, it can be placed properly on the defective memory. 
In multi-level PCM cells, failure may occur at a position in both extreme states or in the partially programmable states of crystalline.
In~\cite{wachterzeh2016codes}, a cell that can only hold levels at least a certain reference level $s>0$ are called \emph{partially stuck}. For multi-level PCMs, the case $s=1$ is particularly important since this means that a cell can reach all crystalline sub-states, but cannot reach the amorphous state anymore.
In flash memories, information is stored by different charge levels. Similar to PCMs, (partial) defects can occur in flash memory cells. In order to write information in a new write, either all current levels are only increased or a whole block has to be erased. Erasing the whole block reduces the lifespan of flash memory devices.

Figure~1 shows the general idea of reliable and (partially) defect memory cells.
\begin{figure} [h]
		\scalebox{0.6}{	
		\begin{tikzpicture}
		\draw
		
		(3.25,1.8) node[anchor=north] {\small {No value can be stored}} 
		
		;
		
		\draw (6,1.4) rectangle (6.5,1.9);
		\draw(3.5,2.40) node[anchor=north] {\small {The value that cell can store}} 
		;

		\fill[blue!30!white] (6,2) rectangle (6.5,2.5);
		\draw (6,2) rectangle (6.5,2.5);
		\draw (6.3,3) node[anchor=north] {/}
		
		(3.5,3) node[anchor=north] {\small {Reliable cell stores any value}}
		;
		\draw[red,-] (6,1.4) -- (6.5,1.9);
		\draw[red,-] (6,1.9) -- (6.5,1.4);

		\end{tikzpicture}
	}
\end{figure}
\vspace{- 0.7 cm}
\begin{figure} [h] 

	\scalebox{0.70}{	
		
		\begin{tikzpicture}[baseline={(0,-0.5)}]
		
		\draw
		(-1.2,2) node[anchor=north] {Level-3 ($1+\alpha$)}
		(-1.2,1.5) node[anchor=north] {Level-2 ($\alpha$)}
		(-1.2,1) node[anchor=north] {Level-1 (1)}
		(-1.2,0.5) node[anchor=north] {Level-0 (0)}
		(0.25,2.5) node[anchor=north] {/}
		(0.75,2.5) node[anchor=north] {/}
		(1.25,2.5) node[anchor=north] {/}
		(1.75,2.5) node[anchor=north] {/}
		(2.25,2.5) node[anchor=north] {/}
		
		;
		\draw[thick,-] (0,1.5) -- (2.5,1.5);

		\draw (-0.5,3.2) node[anchor=north] {Values};
		\draw (-1.3,2.5) node[anchor=north] {Cell Levels};
		\draw[thick,-] (0,2) -- (-0.7,2.75);
		\draw[thick,-] (0,2) -- (0,2.75);
		\draw[thick,-] (0,2) -- (-0.75,2);
		\fill[blue!30!white] (0,0) rectangle (2.5,2);
		\draw (0,0) rectangle (2.5,2);
		\draw (0,0) rectangle (2.5,0.5);
		\draw (0,0) rectangle (2.5,1);
		\draw (0,0) rectangle (0.5,1.5);
		\draw (0,0) rectangle (1,1.5);
		\draw (0,0) rectangle (1.5,1.5);
		\draw (0,0) rectangle (2,1.5);
		\draw[thick,-] (2,1.5) -- (2.5,1.5);
		\draw[thick,-] (2.5,1) -- (2.5,1.5);
		\draw[thick,-] (0.5,1.5) -- (0.5,2);
		\draw[thick,-] (1,1.5) -- (1,2);
		\draw[thick,-] (1.5,1.5) -- (1.5,2);
		\draw[thick,-] (2,1.5) -- (2,2);
		
		\draw
		(3,2.5) node[anchor=north] {0}
		(3.5,2.5) node[anchor=north] {1}
		(4,2.5) node[anchor=north] {/}
		(4.39,2.5) node[anchor=north] {$\alpha$}
		(5,2.5) node[anchor=north] {$1+\alpha$};
		\draw (2.75,0) rectangle (5.25,1.5);
		\fill[blue!30!white] (2.75,0) rectangle (3.25,0.5);
		\fill[blue!30!white] (3.25,0.5) rectangle (3.75,1);
		\fill[blue!30!white] (3.75,0) rectangle (4.25,2);
		\fill[blue!30!white] (4.25,1) rectangle (4.75,1.5);
		\fill[blue!30!white] (4.75,1.5) rectangle (5.25,2);
		\draw[red,-] (2.75,0.5) -- (3.25,1);
		\draw[red,-] (2.75,1) -- (3.25,0.5);
		\draw[red,-] (2.75,1) -- (3.25,1.5);
		\draw[red,-] (2.75,1.5) -- (3.25,1);
		\draw[red,-] (2.75,1.5) -- (3.25,2);
		\draw[red,-] (2.75,2) -- (3.25,1.5);
		\draw[red,-] (3.25,0) -- (3.75,0.5);
		\draw[red,-] (3.25,0.5) -- (3.75,0);
		\draw[red,-] (3.25,1) -- (3.75,1.5);
		\draw[red,-] (3.25,1.5) -- (3.75,1);
		\draw[red,-] (3.25,1.5) -- (3.75,2);
		\draw[red,-] (3.25,2) -- (3.75,1.5);
		\draw[red,-] (4.25,0) -- (4.75,0.5);
		\draw[red,-] (4.25,0.5) -- (4.75,0);
		\draw[red,-] (4.25,1.5) -- (4.75,2);
		\draw[red,-] (4.25,2) -- (4.75,1.5);
		\draw[red,-] (4.25,0.5) -- (4.75,1);
		\draw[red,-] (4.25,1) -- (4.75,0.5);
		\draw[red,-] (4.75,0) -- (5.25,0.5);
		\draw[red,-] (4.75,0.5) -- (5.25,0);
		\draw[red,-] (4.75,0.5) -- (5.25,1);
		\draw[red,-] (4.75,1) -- (5.25,0.5);
		\draw[red,-] (4.75,1) -- (5.25,1.5);
		\draw[red,-] (4.75,1.5) -- (5.25,1);
		\draw (2.75,1.5) rectangle (3.25,2);
		\draw (2.75,1.5) rectangle (3.75,2);
		\draw (2.75,1.5) rectangle (4.25,2);
		\draw (2.75,1.5) rectangle (4.75,2);
		\draw (2.75,1.5) rectangle (5.25,2);
		\draw (2.75,0) rectangle (5.25,0.5);
		\draw (2.75,0) rectangle (5.25,1);
		\draw (2.75,0) rectangle (3.25,1.5);
		\draw (2.75,0) rectangle (3.75,1.5);
		\draw (2.75,0) rectangle (4.25,1.5);
		\draw (2.75,0) rectangle (4.75,1.5);
		;
		\draw
		
		(6,2.5) node[anchor=north] {0}
		(6.5,2.5) node[anchor=north] {1}
		(7,2.5) node[anchor=north] {/}
		(7.39,2.5) node[anchor=north] {$\alpha$}
		(8,2.5) node[anchor=north] {$1+\alpha$};
		\draw[thick,-] (6,1.5) -- (6,1.5);
		\draw (5.75,0) rectangle (7.75,1.5);
		\fill[blue!30!white] (5.75,0) rectangle (6.25,0.5);
		\fill[blue!30!white] (6.25,0.5) rectangle (6.75,1);
		\fill[blue!30!white] (6.75,0) rectangle (7.25,1.5);
		\fill[blue!30!white] (5.75,1) rectangle (7.75,1.5);
		\fill[blue!30!white] (5.75,0.5) rectangle (6.25,1);
		\draw[thick,-] (7.75,1) -- (7.75,1.5);
		\fill[blue!30!white] (7.75,0.5) rectangle (7.75,1);
		\fill[blue!30!white] (5.75,1.5) rectangle (8.25,2);
		\draw[red,-] (6.25,0) -- (6.75,0.5);
		\draw[red,-] (6.25,0.5) -- (6.75,0);
		\draw[red,-] (7.25,0) -- (7.75,0.5);
		\draw[red,-] (7.25,0.5) -- (7.75,0);
		\draw[red,-] (7.25,0.5) -- (7.75,1);
		\draw[red,-] (7.25,1) -- (7.75,0.5);
		\draw[red,-] (7.75,0) -- (8.25,0.5);
		\draw[red,-] (7.75,0.5) -- (8.25,0);
		
		\draw[red,-] (7.75,0.5) -- (8.25,1);
		\draw[red,-] (7.75,1) -- (8.25,0.5);
		\draw[red,-] (7.75,1) -- (8.25,1.5);
		\draw[red,-] (7.75,1.5) -- (8.25,1);
		
		\draw (5.75,0) rectangle (8.25,0.5);
		\draw (5.75,0) rectangle (8.25,1);
		\draw (5.75,0) rectangle (6.25,1.5);
		\draw (5.75,0) rectangle (6.75,1.5);
		\draw (5.75,0) rectangle (7.25,1.5);
		\draw (5.75,0) rectangle (8.25,1.5);
		\draw[thick,-] (7.75,1.5) -- (8.25,1.5);
		\draw[thick,-] (7.75,.5) -- (7.75,1);
		\draw (5.75,1.5) rectangle (6.25,2);
		\draw (6.25,1.5) rectangle (6.75,2);
		\draw (6.75,1.5) rectangle (7.25,2);
		\draw (7.25,1.5) rectangle (7.75,2);
		\draw (7.75,1.5) rectangle (8.25,2);
		\draw
		(1.2,-0.2) node[anchor=north] {(A) Reliable Cells} ;
		\draw (3.8,-0.2) node[anchor=north] {(B) Defect} 
		(7,-0.2) node[anchor=north] {(C) Partially Defect};
		\end{tikzpicture}
	}
	\label{Fig1}
	\caption{Difference between reliable and (partially) defect memory cells. In this figure, there are $n=5$ cells with $q=4$ possible levels. The cell levels $\in \mathbb{F}_4$ are mapped to (0, 1, $\alpha$ or $1+\alpha$). Case (A) illustrates only reliable cells that can store any of the four values. In the stuck scenario, as shown in case (B), the defect cells can store only the exact stuck level $s$. Case (C) is more flexible (partially defect scenario). Partially stuck cells at level $s \geq 1$ can store level $s$ or higher. }	
\end{figure}
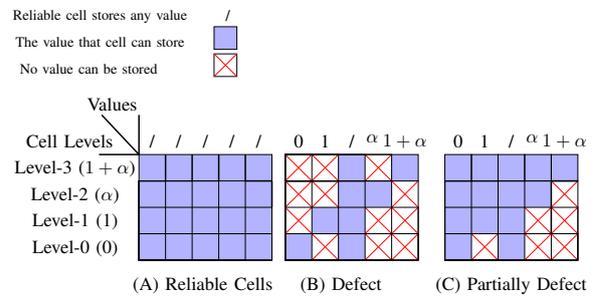
 
\subsection{Related Work}

In \cite{heegard1983partitioned}, code constructions for masking stuck memory cells were proposed. In addition to masking the stuck cells, it is possible to correct errors that occur during the storing and reading processes. A generator matrix of a specific form was constructed for this purpose. In \cite{wachterzeh2016codes}, improvements on the redundancy necessary for masking \emph{partially} stuck cells are achieved, and lower and upper bounds are derived.
However, the paper does not consider error correction in addition to masking.

Combined methods of \cite{heegard1983partitioned} and \cite{wachterzeh2016codes} to obtain code constructions for joint masking partially stuck cells and error correction are conducted in \cite{haideralkim2019psmc}. These code constructions reduce the redundancy necessary for masking, similar to the results in \cite{wachterzeh2016codes}.
In contrast to \cite{wachterzeh2016codes}, however, these constructions are able to correct additional random errors.
\subsection{Our Contribution}

In this paper, we extend the constructions of \cite{haideralkim2019psmc}. We obtain a code construction for combined error correction and masking $q$-ary partially defect cells by means of binary stuck memory cells. Our new construction gives higher rates for several sets of parameters compared to \cite[Theorem~4]{haideralkim2019psmc} and it can correct errors compared to \cite[Theorem~9]{wachterzeh2016codes}. In this paper, we also analyze how close our constructions are to the sphere packing bound (necessary condition) as an upper bound with the presence of partially defect memory constraints. Further, we introduce a Gilbert-Varshamov-like bounds (sufficient condition) to show the existence of codes with certain parameters that can mask the partially stuck memory and correct errors. 

Similar to the main part of \cite{wachterzeh2016codes}, this paper deals with partially-stuck-at-1 cells, i.e., $s=1$, but the results are extendable to arbitrary $s$ similar to \cite[Section~VII]{wachterzeh2016codes}.

\section{Preliminaries}

\subsection{Notations }\label{ssec:notation}
For a prime $q$, let $\mathbb{F}_q$ denote the finite field of order $q$ and ${\mathbb{F}}_q[x]$ be the set of all univariate polynomial with coefficients in $\mathbb{F}_q$. $\mathbb{F}_{q^\lambda}$ denotes an extension field of $\mathbb{F}_q$ of extension degree $\lambda$.
Denote $[f] = \{0,1, \dots, f-1\}$ for $f \in \mathbb{Z}_{>0}$. 

Throughout this paper, let $n$ be the total number of cells, $u$ be the number of (partially) stuck cells, and $t$ be the number of random errors.
Let $s_{\phi_i}$ denote the (partially) stuck level at position $i$, where $i \in [u]$, and $\ve{\phi}= \{\phi_0,\phi_1, \cdots,\phi_{u-1}\} \subseteq [n]$ denotes the positions of the the (partially) stuck cells. 

For our construction, let $k_1$ be the number of information symbols, $l$ be the number of symbols required for masking, and $r$ be the required redundancy for error correction.

\subsection{Definitions}
\subsubsection{Defect and Partially Defect Cells}
A cell is called a defect (\emph{stuck at level $s$}), where $ s \in [q]$, if it cannot change its value and always stores the value $s$. 
A cell is called partially defect (\emph{partially stuck at level $s$}), where $s \in [q]$, if it can only store values which are at least $s$.
We fix throughout the paper a total ordering ``$\geq$'' of the elements of $\Fq$ such that $a \geq 1 \geq 0$ for all $a \in \mathbb{F}_q^*$ (note that such an ordering does not interact well with addition, but this is not relevant here).
If a cell is partially stuck at 0, it is a non-defect cell which can store any of the $q$ levels.
\subsubsection{($u$, $t$)-PSMC}
An $(n,M)_q$  ($u$, $t$)-\emph{partially-stuck-at-masking code} $\mycode{C}$ is a coding scheme consisting of an encoder $\mathcal{E}$ and a decoder $\mathcal{D}$.
The input of the encoder $\mathcal{E}$ is
\begin{itemize}
\item the set of locations of $u$ partially stuck cells $\ve{\phi}= \{\phi_0,\phi_1, \dots,\phi_{u-1}\} \subseteq [n]$, 
\item the partially stuck levels $s_{\phi_0},s_{\phi_2}, \dots,s_{\phi_{u-1}}  \in [q]$, 
\item a message $\m \in \mathcal{M}$, where $\mathcal{M}$ is a message space of cardinality $|\mathcal{M}|$.
\end{itemize}
It outputs a vector $\c \in \Fq^n$ which fulfills ${c}_{\phi_i} \geq s_{\phi_i}$ for all $i=1,\dots,u$.
The decoder is a mapping that takes $\c+\e \in \mathbb{F}_q^n$ as input and returns the correct message $\m$ for all error vectors $\e$ of Hamming weight at most $t$.

\subsection{Code Construction over $\F_{2^\lambda}$}
The new code construction works over the finite field $\F_{2^{\lambda}}$. We denote by $x^0,x^1,\dots,x^{\lambda-1}$ a basis of $\F_{2^\lambda}$ over $\F_{2}$. Hence, any element $a \in \F_{2^\lambda}$ can be uniquely represented as $a=\sum_{i=0}^{\lambda-1} a_i x^i$ where $a_i \in \F_2 \forall i$.
In particular, $a \in \F_2$ if and only if $a_1=\dots=a_{\lambda-1}=0$. This is a crucial property of $\F_{2^\lambda}$ that we will use in Construction~\ref{cons2}.

In the construction, we also use the notation $(\ve{\Gamma})_{\ve{\phi}}$ for $\ve{\Gamma} \in \F_{2^\lambda}^{n+1}$ and $\ve{\phi} \subseteq \{0,\dots,n\}$, by which we mean the sub-vector of $\ve{\Gamma}$ indexed by the entries of $\ve{\phi}$.

\subsection{Error Models}
Assume that the memory has $u$ partially stuck cells at positions $\ve{\phi}= \{\phi_0,\phi_1, \dots,\phi_{u-1}\} \subseteq [n]$ and $n$ cells in total. 

In the \emph{non-overlapping model}, the $t$ errors can happen only at positions $\ve{\Psi}= \{\Psi_0,\Psi_1, \dots,\Psi_{{n-u}-1}\} = [n] \setminus \ve{\phi}$. 

In the \emph{overlapping model}, on the other hand, we assume that $t$ errors can happen in any cell, i.e., $\ve{\Psi} \subseteq [n]$. If errors happen in the $u$ partially stuck cells, we assume that the {error attains only values such that the corrupted vector still obeys the partially stuck constraints.}

\section{Codes for (Partially) Defect Memories}

We propose a new code construction for simultaneous masking and error correction.
The new construction is based on the masking-only construction in \cite[Section~\rom{6}]{wachterzeh2016codes}, which is able to mask $u\geq q$ partially stuck positions, where $q$ is the field size of the masking code, but cannot correct any errors.
We generalize this construction to be able to cope with errors.
Compared to \cite[Theorem~4]{haideralkim2019psmc}, the new construction may lead to larger code dimensions for a given pair ($u$, $t$), in a similar fashion as \cite[Construction~5]{wachterzeh2016codes} improves compared to \cite[Construction~4]{wachterzeh2016codes}.

\begin{const}\label{cons2}
Let $n,u,t,\lambda,k,k_1,r,l$ be positive integers with $u,t \leq n$, $\lambda > 1$, $k_1 = n - l - r -1$, and $k = l + k_1 +1$. 
Suppose that there are matrices
\begin{itemize}
\item $\P \in \F_{2^\lambda}^{k_1 \times r}$ and
\item $\H_0 := [\ve{I}_{l \times l} \mid \R_{l \times (n-l)}] \in \F_2^{l \times n}$ that is a systematic 
parity-check matrix of a binary code $\mycode{C}_0$ with parameters $[n, k_1+r,d_0 \geq u_0+1]_2$, where $u_0 := \lfloor 2u/2^\lambda\rfloor$,
\end{itemize}
such that
\begin{center}			
\scalebox{1}{	
$ \ve{G} = \begin{bmatrix} &\ve{H}_0&& 0 \\&\ve{G}_1&&\vdots \\&&&0 \\ 1&\dots&&1   \end{bmatrix}
:= \begin{bmatrix}  \ve{I}_{l \times l} & \multicolumn{2}{c}{\text{--- }\R_{l \times (n-l)}\text{ ---}}  & \ve{0}_{l\times1}   \\  \ve{0}_{k_1 \times l} & \ve{I}_{k_1 \times k_1} &\ve{P}_{k_1 \times r} &\ve{0}_{k_1\times1} \\\multicolumn{4}{c}{\text{-------------- }\ve{1}_{1 \times (n+1)}\text{ --------------}}  \end{bmatrix}$}
\end{center}
is a generator matrix of a $[n+1, k = l+k_1+1, d \geq 2t+1]_{2^\lambda}$ code $\Code$. 

Based on these definitions, we define a coding scheme in Algorithm~\ref{a13} and Algorithm~\ref{a14}.   
\end{const}

\begin{algorithm}[ht]
	
	\caption{Encoding ($\ve{m}; \ve{m^\prime}; \ve{\phi}$)}
	\label{a13}
	\KwIn{
		\begin{itemize}
			\item Message: \\
			$\ve{m} = (m_0, m_1,\dots, m_{k_1-1})  \in {\mathbb{F}_{2^\lambda}^{k_1}}$ and\\ \vspace{0.05 cm} 
			$\ve{m^\prime} = (m^\prime_0, m^\prime_1,\dots, m^\prime_{l-1}) \in \mathcal{F}^{l}$, where $\mathcal{F} := \{ \sum_{i=0}^{\lambda-1} a_i x^i \in \F_{2^\lambda} \, : \, a_0=0 \} \subseteq \F_{2^\lambda}$.
			\item Positions of partially stuck-at-($s=1$) cells: $\ve{\phi} \subseteq \{0,\dots,n\}$
			\item Notions introduced in Construction~\ref{cons2}.
		\end{itemize}
	}
	$\a \gets \m' \cdot \big[\H_0 \mid 0\big]$ \\
	$\b \gets \m \cdot [\G_1 \mid 0]$ \\ 
	$\ve{w} \gets \a + \b + (z+1) \cdot \ve{1}_{1\times (n+1)}$, where $z \in \mathcal{F}$ is chosen such that $(\a + \b + (z+1) \cdot \ve{1}_{1\times (n+1)})_{\ve{\phi}}$ has at most $u_0 = \lfloor 2^{1-\lambda} u \rfloor$ entries in $\F_2$ \\
	Choose $\ve{z} \in \F_2^{l}$ such that, for any $i=0,\dots,u-1$,
	\begin{equation*}
	(\ve{z} \cdot \H_0 \mid 0)_{\phi_i} =
	\begin{cases}
	0, &\text{if $(\ve{w})_{\phi_i}=1$}, \\
	1, &\text{if $(\ve{w})_{\phi_i}=0$}, \\
	\text{arbitrary}, &\text{if $(\ve{w})_{\phi_i} \notin \F_{2}$}. \\
	\end{cases}
	\end{equation*} \\
	\KwOut{$\ve{w}+[\ve{z} \cdot \H_0 \mid 0]$}

\end{algorithm}

\begin{algorithm}[ht]
	\label{a14}
	\caption{Decoding}	
	\KwIn{
		\begin{itemize}
			\item $\ve{y} = \ve{c}+\ve{e} \in \mathbb{F}^{n+1}_{2^\lambda}$, where $\ve{c}$ is a valid output of Algorithm~\ref{a13} and $\ve{e}$ is an error of Hamming weight at most $t$.
			\item Notions introduced in Construction~\ref{cons2}.
		\end{itemize}
	}
	$\v \gets$ decode $\ve{y}$ in the code $\mathcal{C}$ \\
	$\v' \gets \ve{v}-(v_n+1)\ve{1}_{1 \times (n+1)}$ \\
	$\hat{\m}'' \gets [v'_0,\dots,v'_{l-1}]$ \\
	$\hat{\m}' \gets [\varphi(m''_0),\dots,\varphi(m''_{l-1})]$, where
	\begin{equation*}
\varphi \, : \, \F_{2^\lambda} \to \mathcal{F}, \quad 
\sum_{i=0}^{\lambda-1} a_i x^i \mapsto 0 x^0 + \sum_{i=1}^{\lambda-1} a_i x^i.
\end{equation*} \\
	$\ve{v}'' \gets \ve{v}'- \hat{\m}'' \cdot \begin{bmatrix}  \ve{I}_{l \times l} & \R_{l \times (n-l)}  & \ve{0}_{l\times1} \end{bmatrix}$ \\
	$\hat{\m} \gets [v_{l}'',\dots,v_{l+k_1-1}]$ \\
	\KwOut{$\hat{\m}$ and $\hat{\m}'$}
	
\end{algorithm}	

\begin{thmmystyle}
\label{thm9}
The coding scheme in Construction~\ref{cons2} is a ($u,t$)-PSMC of length $n+1$ and cardinality
\begin{equation*}
\mathcal{M}_{u,t} = 2^{\lambda (k_1+l)-l}.
\end{equation*}
\end{thmmystyle}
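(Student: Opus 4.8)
The plan is to check the three defining properties of a $(u,t)$-PSMC in turn: the size of the message space, the masking property of Algorithm~\ref{a13}, and the error-correction property of Algorithm~\ref{a14}. The cardinality is immediate and I would settle it first. The free data are $\m \in \F_{2^\lambda}^{k_1}$ and $\m' \in \mathcal{F}^{l}$; since $\mathcal{F}=\{\sum_{i} a_i x^i : a_0 = 0\}$ is an $\F_2$-subspace of dimension $\lambda-1$, it has $2^{\lambda-1}$ elements, so the number of messages is $2^{\lambda k_1}\,(2^{\lambda-1})^{l}=2^{\lambda(k_1+l)-l}$. That this count equals the code cardinality follows once the encoder is shown to be injective, which is a by-product of the decoding analysis below.

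For the masking property I would show that both selection steps in Algorithm~\ref{a13} are always feasible and that their combined effect forces every partially-stuck coordinate to be nonzero (hence $\geq 1$ in the fixed ordering). Expanding each entry in the basis $x^0,\dots,x^{\lambda-1}$, an entry of $\ve{w}$ lies in $\F_2$ exactly when its coordinates in $x^1,\dots,x^{\lambda-1}$ all vanish. Adding $(z+1)\ve{1}$ with $z\in\mathcal{F}$ shifts only these $\lambda-1$ \emph{high} coordinates, and by the same amount in every position, so a given stuck position falls into $\F_2$ for precisely one of the $2^{\lambda-1}$ admissible high-parts of $z$. An averaging (pigeonhole) argument over the $u$ stuck positions then guarantees a choice of $z$ leaving at most $\lfloor u/2^{\lambda-1}\rfloor = u_0$ stuck positions in $\F_2$; this is exactly where the bound $u_0=\lfloor 2u/2^\lambda\rfloor$ originates. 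The second step must then drive those at most $u_0$ positions away from $0$: adding $\ve{z}\cdot\H_0$ with $\ve{z}\in\F_2^{l}$ can realize any prescribed $\F_2$-pattern on any $u_0$ coordinates because the corresponding $u_0$ columns of $\H_0$ are $\F_2$-linearly independent, a consequence of $\mycode{C}_0$ having minimum distance $d_0\geq u_0+1$. Combining the two steps, at each stuck position the output is either already outside $\F_2$ (hence nonzero) or has been set to $1$, so the partially-stuck-at-$1$ constraint holds.

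For error correction I would first observe that the output $\c=\a+\b+(z+1)\ve{1}+[\ve{z}\cdot\H_0\mid 0]$ is an $\F_{2^\lambda}$-linear combination of the rows of $\G$: the terms $\a$ and $\ve{z}\cdot\H_0$ use the top $l$ rows, $\b$ uses the middle $k_1$ rows, and $(z+1)\ve{1}$ uses the all-ones row. Hence $\c\in\Code$, and since $\Code$ has minimum distance $d\geq 2t+1$, the decoder in line~1 of Algorithm~\ref{a14} returns $\c$ from any received word $\y=\c+\e$ with $\e$ of Hamming weight at most $t$, regardless of where the errors lie (so both error models are covered). I would then unwind the remaining lines: the last coordinate of $\c$ equals $z+1$ and therefore records the global shift, so subtracting the all-ones contribution identified by it reduces $\v$ to $(\m'+\ve{z})\cdot[\H_0\mid 0]+\m\cdot[\G_1\mid 0]$ (up to a known $\F_2$-valued all-ones offset). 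The systematic identity block of $\H_0$ then exposes $\m'$ perturbed only by $\F_2$-valued quantities, and applying $\varphi$, which deletes the $x^0$-coordinate, removes exactly this perturbation and returns $\m'$ because each $m'_j\in\mathcal{F}$. Subtracting the now-identified top contribution $\hat{\m}''\cdot[\H_0\mid 0]$ and reading off the systematic block of $\G_1$ should then return $\m$, establishing both decodability and injectivity of the encoder.

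The main obstacle I anticipate is the bookkeeping in the decoding trace rather than any deep idea: one has to follow how the unknown $\F_2$-quantities $\ve{z}$, the all-ones term, and the scalar shift $z+1$ propagate through lines 2--6 and verify that $\varphi$ annihilates precisely the $x^0$-nuisance while leaving $\m'$ and, after back-substitution, $\m$ intact. The two load-bearing facts behind masking---the pigeonhole bound producing $u_0$ and the $\F_2$-linear independence of $u_0$ columns of $\H_0$ from $d_0\geq u_0+1$---are clean, so I would isolate them as explicit claims and let the remainder follow by direct substitution.
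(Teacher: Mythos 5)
Your proposal follows the paper's own proof in all essentials: the same cardinality count $|\F_{2^\lambda}|^{k_1}\,|\mathcal{F}|^{l} = 2^{\lambda(k_1+l)-l}$, the same two-stage masking argument (a pigeonhole step leaving at most $u_0$ stuck entries in $\F_2$, then correcting those using the fact that any $u_0$ columns of $\H_0$ are $\F_2$-linearly independent because $d_0 \geq u_0+1$), the same codeword-membership identity $[\m'+\ve{z}\mid \m \mid z+1]\cdot\G \in \Code$ to invoke the minimum distance $\geq 2t+1$, and the same back-substitution with $\varphi$ to recover $\m'$ and $\m$. Your coordinate-wise phrasing of the pigeonhole step is equivalent to the paper's partition of $\F_{2^\lambda}$ into the $2^{\lambda-1}$ sets $\{c,c+1\}$, $c\in\mathcal{F}$.

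One concrete gap remains: your claim that adding $\ve{z}\cdot\H_0$ "can realize any prescribed $\F_2$-pattern on any $u_0$ coordinates" is not literally true at the appended coordinate $n$, where the masking vector $[\ve{z}\cdot\H_0\mid 0]$ is pinned to $0$ and there is no column of $\H_0$ to use; since $\ve{\phi}\subseteq\{0,\dots,n\}$ may contain $n$, this case needs an argument. The paper closes it with a single observation that your write-up omits: the last entry of $\a+\b$ is $0$, so the last entry of $\ve{w}$ equals $z+1$, which is nonzero because $1\notin\mathcal{F}$ forces $z\neq 1$; consequently the value required at position $n$ is never $1$ (it is $0$ when $\ve{w}_n=1$, arbitrary when $\ve{w}_n\notin\F_2$), and the pinned $0$ always suffices. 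With that sentence added your masking argument is complete. Your hedge about a "known $\F_2$-valued all-ones offset" in the decoding trace is also worth tightening: recovery of $\m'$ via $\varphi$ is indeed insensitive to binary offsets, but recovery of $\m$ is not, so one must subtract exactly the all-ones contribution $v_n\ve{1}=(z+1)\ve{1}$ identified by the last coordinate, as your main line of argument (and the paper's proof) does.
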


\begin{proof}
\textbf{Masking:}
We first prove that Algorithm~\ref{a13} outputs a masked vector (i.e., a vector that contains no zeros in the partially stuck-at-($s=1$) positions $(\phi_0,\dots,\phi_{u-1})$).

By construction, the vector $(\a + \b)_{\ve{\phi}}$ has length $u$ and its entries are in $\F_{2^\lambda}$.
Consider the partition $\{c,c+1\}$, for all $c \in \mathcal{F}$, of $\F_{2^\lambda}$. These sets are pairwise disjoint for different $c$ and there are $2^{\lambda-1}$ such sets.
By the pigeonhole principle, there is one such set, say $\{-z,-z+1\}$, such that the vector $(\a + \b + (z+1) \cdot \ve{1}_{1\times (n+1)})_{\ve{\phi}}$ contains at most $\lfloor \tfrac{u}{|\mathcal{F}|} \rfloor = \lfloor 2^{1-\lambda} u \rfloor = u_0$ elements in $\{0,1\}$.

We need to mask the few remaining entries of $\ve{w}$ that are $0$.
First note that since the last entry of $\a+\b$ is zero and we add $z+1 \neq 0$ to it, the last entry of $\ve{w}$ is always non-zero (this is relevant if the last position is partially stuck).
We choose a binary vector $\ve{z}$ such that $[\ve{z} \cdot \H_0 \mid 0]_\ve{\phi}$ contains a $1$ (or $0$) in positions in which the corresponding entry of $(\ve{w})_\ve{\phi}$ is $0$ (or $1$, respectively).
Such a vector $\ve{z}$ always exists since
\begin{itemize}
\item the number of $\{0,1\}$-entries in $(\ve{w})_\ve{\phi}$ is at most $u_0$,
\item $\H_0$ is a parity-check matrix of a binary code of minimum distance $\geq u_0+1$, which means that any $u_0$ columns of $\H_0$ are linearly independent, and
\item the vector $[\ve{z} \cdot \H_0 \mid 0]$ is in $\F_2$ (since $\ve{z}$ and $\H_0$ are binary).
\end{itemize}
Hence, all entries of $(\ve{w}+[\ve{z} \cdot \H_0 \mid 0])_\ve{\phi}$ are either $1$ or in $\F_{2^\lambda}\setminus \F_2$, i.e., non-zero.

\textbf{Error correction:}
Next, we show that the output of Algorithm~\ref{a13} is a codeword of the code $\mathcal{C}$ as defined in Construction~\ref{cons2}, and hence, the first step of Algorithm~\ref{a14} is able to correct up to $t$ errors in the masked vector.

To prove this, we rewrite
\begin{align}
&\ve{w}+[\ve{z} \cdot \H_0 \mid 0] \notag \\
&= \a + \b + (z+1) \cdot \ve{1}_{1\times (n+1)} + [\ve{z} \cdot \H_0 \mid 0] \notag \\
&= \big[ \m'+\ve{z} \mid \m \mid z+1 \big] \cdot \begin{bmatrix} &\ve{H}_0&& 0 \\&\ve{G}_1&&\vdots \\&&&0 \\ 1&\dots&&1   \end{bmatrix} \in \mathcal{C} \label{eq:masked_codeword}
\end{align}

\textbf{Recovery of the messages:}
We show that Algorithm~\ref{a14} uniquely retrieves the messages $\m$ and $\m'$ from the vector $\ve{v} := \ve{w}+[\ve{z} \cdot \H_0 \mid 0]$, which is obtained after the error-correction step.

By \eqref{eq:masked_codeword}, we have
\begin{align*}
\ve{v} = \big[ \m'+\ve{z} \mid \m \mid z+1 \big] \begin{bmatrix}  \ve{I}_{l \times l} & \multicolumn{2}{c}{\text{--- }\R_{l \times (n-l)}\text{ ---}}  & \ve{0}_{l\times1}   \\  \ve{0}_{k_1 \times l} & \ve{I}_{k_1 \times k_1} &\ve{P}_{k_1 \times r} &\ve{0}_{k_1\times1} \\\multicolumn{4}{c}{\text{-------------- }\ve{1}_{1 \times (n+1)}\text{ --------------}}  \end{bmatrix}
\end{align*}
Hence, we can retrieve $z+1$ as the last entry of $\ve{v}$ and subtract it from $\ve{v}$, i.e.,
\begin{align*}
\ve{v}' :=& \; \ve{v}-(z+1)\ve{1}_{1 \times (n+1)} \\
=& \; \big[ \m'+\ve{z} \mid \m\big] \begin{bmatrix}  \ve{I}_{l \times l} & \multicolumn{2}{c}{\text{--- }\R_{l \times (n-l)}\text{ ---}}  & \ve{0}_{l\times1}   \\  \ve{0}_{k_1 \times l} & \ve{I}_{k_1 \times k_1} &\ve{P}_{k_1 \times r} &\ve{0}_{k_1\times1}  \end{bmatrix}
\end{align*}
Then, the first $l$ positions of $\ve{v}'$ equals $\hat{\m}'' = \m'+\ve{z}$. Thus, we can also subtract
\begin{align*}
\ve{v}'' :=& \; \ve{v}'- (\m'+\ve{z}) \cdot \begin{bmatrix}  \ve{I}_{l \times l} & \R_{l \times (n-l)}  & \ve{0}_{l\times1} \end{bmatrix}\\
=& \; \m \cdot  \begin{bmatrix}  \ve{0}_{k_1 \times l} & \ve{I}_{k_1 \times k_1} &\ve{P}_{k_1 \times r} &\ve{0}_{k_1\times1}  \end{bmatrix}
\end{align*}
and obtain $\m = \hat{\m}$ as the $l+1$st to $l+k_1$-th position of $\ve{v}''$.

It remains to prove that we can uniquely retrieve $\m' = \hat{\m}'$ from $\m'+\ve{z} = \hat{\m}''$.
Recall that the entries of $\m'$ are in $\mathcal{F}$ (i.e., elements are, represented in a polynomial basis $x^i$ of $\F_{2^\lambda}$ over $\F_2$, of the form $\sum_{i=0}^{\lambda-1} a_i x^i$ with $a_0=0$).
Hence, for $\alpha \in \F_2$ and $\beta = 0 x^0 + \sum_{i=1}^{\lambda-1} a_i x^i \in \mathcal{F}$, we have
\begin{align*}
\varphi(\alpha+\beta) &= \varphi\!\left(\alpha x^0 + \sum_{i=1}^{\lambda-1} a_i x^i \in \mathcal{F} \right) \\
&= 0 x^0 + \sum_{i=1}^{\lambda-1} a_i x^i \in \mathcal{F} = \beta.
\end{align*}

In summary, Algorithm~\ref{a13} encodes messages $\m$ and $\m'$ into a codeword of the code $\mathcal{C}$ such that all $u$ partially-stuck positions are non-zero (i.e., masked).
Algorithm~\ref{a14} then corrects up to $t$ errors using the code $\mathcal{C}$ and recovers the message vectors.
This means that the coding scheme is a ($u$, $t$)-PSMC.
The cardinality of the code is given by
\begin{align*}
\mathcal{M}_{u,t} = |\F_{2^\lambda}|^{k_1} |\mathcal{F}|^l = 2^{\lambda (k_1+l)-l},
\end{align*}
which concludes the proof.
\end{proof}

On the first glance, it is not immediately clear how to construct the matrices $\H_0$ and $\G_1$ in Theorem~\ref{thm9}. An intuition is as follows: we look for a code $\Code$ that 
\begin{itemize}
\item has minimum distance $\geq 2t+1$ (to correct at least $t$ errors),
\item contains the all-one vector, and
\item when punctured at the last position, its binary subfield subcode must contain a code (generated by $\H_0$) whose dual code has minimum distance at least $u_0+1$.
\end{itemize}

Construction~\ref{cons2} can give higher rates for some chosen codes than using \cite[Theorem~4]{haideralkim2019psmc}. Example~1 in the appendix shows a code $\mathcal{C}$ of parameters $[15,11,3]_4$ and its generator matrix $\ve{G}$, $u=4$ and $t=1$. Using Theorem~\ref{thm9}, the cardinality $\mathcal{M}_{u,t} = 2^{2 (6+4)-4}= 4^{8}$ ($k_1=8$ information symbols). In contrast \cite[Theorem~4]{haideralkim2019psmc}, the cardinality $\mathcal{M}_{u,t} = 4^{7}$ ($k_1=7$ information symbols) using $\ve{G}'$, while in both theorems we can mask $u=q$ cells.\\

\begin{figure*}[b]
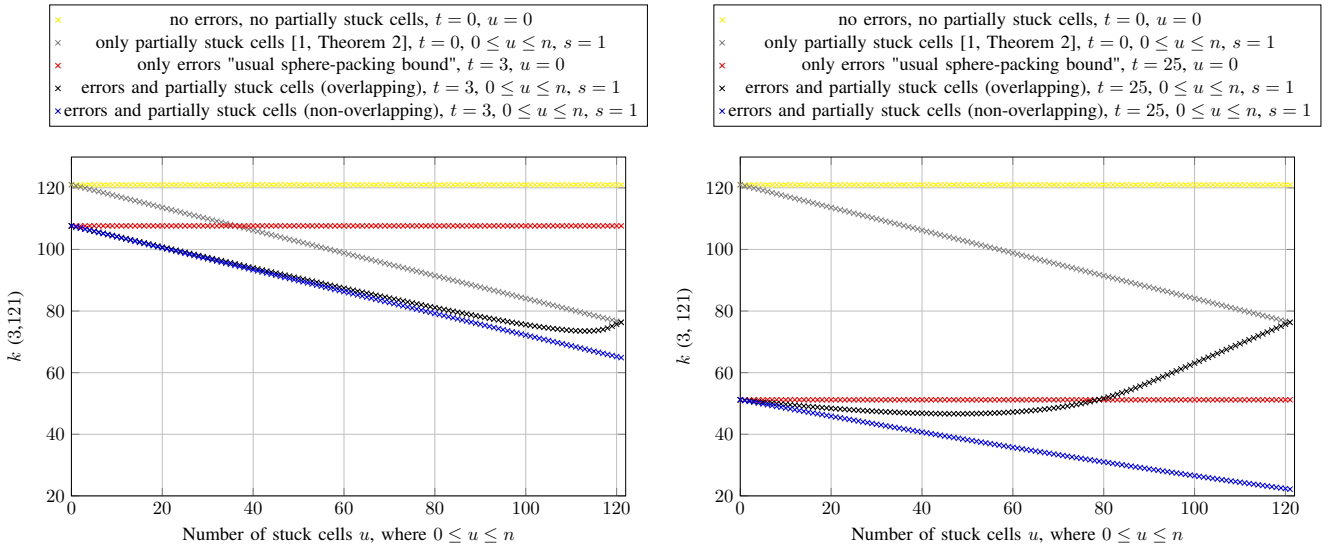

	\scalebox{0.70}{
	\begin{tikzpicture} 
	\begin{axis}[
	filter discard warning=false,
	height=8cm,
	width=12cm,
	ylabel= {$k$ ($3$,$121$)},
	xmin = 0,
	xmax = 122,
	ymin = 20,
	ymax = 130,
	xlabel= {Number of stuck cells $u$, where $0\leq u\leq n$},
	grid=major,
	legend style={at={(0.5,1.45)},anchor=north}
	]
	\def\mymark{x}
	\input{figures/SPq3n121t3/k_Only_Normal_Cells}
	\addlegendentry{no errors, no partially stuck cells, $t=0$, $u=0$};
	\input{figures/SPq3n121t3/k_cases_Only_PSMC}
	\addlegendentry{only partially stuck cells \cite[Theorem~2]{wachterzeh2016codes}, $t=0$, $0\leq u \leq n$, $s=1$};
	\input{figures/SPq3n121t3/k_cases_sphere_packing}
	\addlegendentry{only errors "usual sphere-packing bound", $t =3$, $u=0$};
	\input{figures/SPq3n121t3/k_cases_non_overlapping}
	\addlegendentry{errors and partially stuck cells (overlapping), $t=3$, $0\leq u \leq n$, $s=1$};
	\input{figures/SPq3n121t3/k_cases_overlapping}
	\addlegendentry{errors and partially stuck cells (non-overlapping), $t=3$, $0\leq u \leq n$, $s=1$};
	
	\end{axis}
	
	\end{tikzpicture} }
	\scalebox{0.70}{
	\begin{tikzpicture} 
	\begin{axis}[
	filter discard warning=false,
	height=8cm,
	width=12cm,
	ylabel= {$k$ ($3$, $121$)},
	xmin = 0,
	xmax = 122,
	ymin = 20,
	ymax = 130,
	xlabel= {Number of stuck cells $u$, where $0\leq u\leq n$},
	grid=major,
	legend style={at={(0.50,1.45)},anchor=north}
	]
	\def\mymark{x}
	\input{figures/SPq3n121t25/k_Only_Normal_Cells}
	\addlegendentry{no errors, no partially stuck cells, $t=0$, $u=0$};
	\input{figures/SPq3n121t25/k_cases_Only_PSMC}
	\addlegendentry{only partially stuck cells \cite[Theorem~2]{wachterzeh2016codes}, $t=0$, $0\leq u \leq n$, $s=1$};
	\input{figures/SPq3n121t25/k_cases_sphere_packing}
	\addlegendentry{only errors "usual sphere-packing bound", $t =25$, $u=0$};
	\input{figures/SPq3n121t25/k_cases_non_overlapping}
	\addlegendentry{errors and partially stuck cells (overlapping), $t=25$, $0\leq u \leq n$, $s=1$};
	\input{figures/SPq3n121t25/k_cases_overlapping}
	\addlegendentry{errors and partially stuck cells (non-overlapping), $t=25$, $0\leq u \leq n$, $s=1$};
	
	\end{axis}
	
	\end{tikzpicture}}
	\caption{\textbf{Sphere-packing bounds}: Comparison for $k (q,n)$ information symbols for the classical sphere-packing bound ("only errors") and our sphere-packing-like bounds ("errors and stuck cells") for non-overlapping and overlapping errors.
		The chosen parameters are $\lambda =5$ and $ q =3$, and $n = ((q^\lambda -1)/(q-1))$.}
	\label{fig2}
\end{figure*}

\section {Bounds on the Cardinality and Minimum Distance  }

In this section, we derive bounds on ($u$,$t$)-PSMCs with the goal to evaluate the parameters of code constructions.
We derive a sphere-packing-type bound (necessary condition) and a \emph{Gilbert-Varshamov} (GV)-type bound (sufficient condition). Notice that the GV-like bound proves the existence of ($u$,$t$)-PSMCs only when $u <q$.
\subsection{Sphere-Packing Bound on PSMCs}
The \emph{sphere-packing bound} considers the size of a code by packing spheres around each codeword. We combine the proof of the classical sphere packing bound with constraints from partially stuck cells. 
 \begin{thmmystyle}
	\label{thm7}
	Any ($u$,$t$)-PSMC over $\mathbb{F}_q$ of cardinality $\mathcal{M}_{u,t}$ has to satisfy:
	\begin{itemize}
		\item for non-overlapping errors: 
		\begin{equation}\label{eq8}
		\mathcal{M}_{u,t} \cdot \sum_{j=0}^{t} {n-u \choose j} (q-1)^{j} \leq q^{n-u} \prod_{i=0}^{u-1} (q-s_{\phi_i}),
		\end{equation}
        \item for overlapping errors under the assumption that $ s_{\phi_i}< q-1$, for $i=0,\dots,u-1$: 
	\begin{align}	
		 \hspace{-4ex}\mathcal{M}_{u,t} &\cdot \sum_{j=0}^{t}\sum_{j_1=0}^{j}\tbinom{n-u}{j_1}
		 (q-1)^{j_1}  \tbinom{u}{j-j_1}\prod_{i\in \mathcal{J}}
		 (q-1-s_{\phi_i})\nonumber\\ 
		 &\leq q^{n-u} \prod_{i=0}^{u-1} (q-s_{\phi_i}),\label{eq9}
	\end{align}
	where $\mathcal{J}$ denotes the set of cardinality $j-j_1$ of stuck cells that is affected by errors.
	\end{itemize}
 
\end{thmmystyle}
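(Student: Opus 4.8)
The plan is to prove both inequalities by a sphere-packing argument in which the ``spheres'' are the decoding balls of the scheme and the ambient space is the set of all vectors that obey the partial-stuck constraints. Fix a set of stuck positions $\ve{\phi}=\{\phi_0,\dots,\phi_{u-1}\}$ together with the levels $s_{\phi_0},\dots,s_{\phi_{u-1}}$. Call a vector $\v\in\Fq^n$ \emph{admissible} if $v_{\phi_i}\ge s_{\phi_i}$ for all $i$; using the total ordering fixed in the preliminaries there are exactly $q-s_{\phi_i}$ field elements that are $\ge s_{\phi_i}$, so the number of admissible vectors is $q^{n-u}\prod_{i=0}^{u-1}(q-s_{\phi_i})$, which is precisely the right-hand side of both \eqref{eq8} and \eqref{eq9}. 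Since the scheme is a $(u,t)$-PSMC, for this fixed configuration the encoder sends the $\mathcal{M}_{u,t}$ messages to $\mathcal{M}_{u,t}$ distinct admissible codewords $\c$ (distinctness follows because decoding must succeed already at zero error), and the decoder recovers $\m$ from $\y=\c+\e$ for every admitted error $\e$. Hence the decoding balls $B(\c)=\{\c+\e\}$ must be pairwise disjoint, and every element of $B(\c)$ is again admissible. The bound then reduces to the count $\mathcal{M}_{u,t}\cdot|B(\c)|\le q^{n-u}\prod_{i=0}^{u-1}(q-s_{\phi_i})$.

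For the non-overlapping model the admitted errors are exactly the vectors with $\suppH(\e)\subseteq\ve{\Psi}=[n]\setminus\ve{\phi}$ and Hamming weight at most $t$. Choosing $j$ of the $n-u$ non-stuck positions and a nonzero value at each gives $|B(\c)|=\sum_{j=0}^{t}\binom{n-u}{j}(q-1)^j$, independent of $\c$. Because such an error leaves every stuck coordinate untouched, $\c+\e$ stays admissible, so the balls sit inside the admissible set; together with disjointness this yields \eqref{eq8}.

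For the overlapping model an error may also fall on a stuck coordinate $\phi_i$, but then $(\c+\e)_{\phi_i}$ must remain $\ge s_{\phi_i}$ while differing from $c_{\phi_i}$. The key observation is that the number of such legal values at $\phi_i$ equals $(q-s_{\phi_i})-1=q-1-s_{\phi_i}$, which is positive precisely under the stated hypothesis $s_{\phi_i}<q-1$ and, crucially, is independent of the stored symbol $c_{\phi_i}$; hence $|B(\c)|$ is again common to all codewords. Splitting an error of weight $j$ into $j_1$ ordinary errors on $\ve{\Psi}$ and $j-j_1$ errors on a size-$(j-j_1)$ subset $\mathcal{J}$ of stuck positions, and summing the product $\prod_{i\in\mathcal{J}}(q-1-s_{\phi_i})$ over the choice of $\mathcal{J}$, produces the ball size appearing on the left of \eqref{eq9}. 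As before every point of $B(\c)$ is admissible and the balls are disjoint, which gives \eqref{eq9}.

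The routine parts are the two cardinality counts and the observation that admissibility is preserved under admitted errors. The step requiring the most care is the overlapping count: one must verify that the per-coordinate multiplicity $q-1-s_{\phi_i}$ does not depend on the particular codeword (so that all balls share a common size and the single product $\mathcal{M}_{u,t}\cdot|B(\c)|$ is legitimate), and that the summation over affected stuck subsets $\mathcal{J}$ is organized correctly by the weight split $j=j_1+(j-j_1)$. This is exactly where the hypothesis $s_{\phi_i}<q-1$ enters, guaranteeing that each affected stuck cell admits at least one legal erroneous value.
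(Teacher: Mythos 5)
Your proof is correct and takes essentially the same sphere-packing route as the paper: you bound the number of admissible vectors by $q^{n-u}\prod_{i=0}^{u-1}(q-s_{\phi_i})$ (which the paper imports as \cite[Theorem~2]{wachterzeh2016codes}), observe that the decoding balls are pairwise disjoint and stay inside the admissible set, and compute the ball sizes for the two error models to obtain the left-hand sides of \eqref{eq8} and \eqref{eq9}. Your explicit check that the per-cell count $q-1-s_{\phi_i}$ of legal corrupted values is independent of the stored symbol (so all balls share one size) is the same counting the paper performs, just stated more carefully.
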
 
\begin{proof}
	First we prove the \textbf{non-overlapping} scenario. \cite[Theorem~2]{wachterzeh2016codes} proves that at most $\mathcal{M}_u $ $q$-ary codewords can be stored in a memory in the presence of $u$ partially stuck cells, where:
	\begin{equation} \label{eq10}
	\mathcal{M}_u  \leq q^{n-u} \prod_{i=1}^{u} (q-s_{\phi_i}).
	\end{equation} 
    We assume that the errors can happen only in the $n-u$ non-stuck cells. A sphere $\mathcal{B}_{t,n-u}(\vec{a})$ of radius $t$ around a word $\vec{a} \in \mathbb{F}^{n-u}_q$ is the set of all words in Hamming distance at most $t$, i.e, $\mathcal{B}_{t,n-u}(\vec{a}) := \{ \vec{b}\in \mathbb{F}^{n-u}_q: d(\vec{a},\vec{b})\leq t\}$.
    There are $n-u \choose j$ words in distance exactly $j$ from a fixed word and $|\mathcal{B}_{t,n-u}(\vec{a})|= \sum_{j=0}^{t} {n-u \choose j} (q-1)^{j}$.    
	Thus, the total number of words in all decoding spheres (left-hand side (LHS) of~\eqref{eq8})
	is at most the total number of possible words $\mathcal{M}_u$ (right-hand side (RHS) of~\eqref{eq8}).
	
 Second we prove the \textbf{overlapping} scenario. 
 Clearly, \eqref{eq10} is still an upper bound on the total number of possible words, i.e., the RHS of the sphere-packing bound.
 For the LHS in this case, the errors can happen either in the $n-u$ non-stuck cells or in the $u$ stuck cells such that $s_{\phi_i} + e_i \leq q-1$.

 In this case, there are $\sum_{j_1=0}^{j}{n-u \choose j_1} {u \choose j-j_1}$ possibilities for $j$ erroneous positions (i.e., $j_1$ errors happen at non-stuck positions and $j-j_1$ errors happen at stuck positions). Therefore, there are $\mathcal{B}_{t,u,n}:=\sum_{j=0}^{t}\sum_{j_1=0}^{j}{n-u \choose j_1} (q-1)^{j_1}  {u \choose j-j_1} \prod_{i \in \mathcal{J}}(q-1-s_{\phi_i})$ distinct words that can result from a fixed word when $u$ partially stuck cells and at most $t$ random errors happen, where $\mathcal{J}$ denotes the set of cardinality $j-j_1$ of stuck cells that is affected by errors.

 Since the set of these $\mathcal{B}_{t,u,n}$ words around a fixed codeword is disjoint to the corresponding set around another fixed codeword, ($ \mathcal{M}_{u,t} \cdot\mathcal{B}_{t,u,n}$) is at most the total number of possibilities and the statement follows.
\end{proof}
Figure~\ref{fig2} illustrates the new sphere packing bounds. They are compared to the amount of storable information symbols for a completely reliable memory (i.e., no stuck cells, no errors) and the upper bound on the cardinality of an only-masking PSMC (only stuck cells, no errors) derived in \cite{wachterzeh2016codes}.
The figure also compares the overlapping and non-overlapping error model for $s_{\phi_i}= 1$, for all $i$. In the overlapping scenario, when the number of errors is small (e.g., $t=3$), the number of information symbols is bounded by the sphere packing bound and it is slightly better than in non-overlapping case. The more errors happen (e.g., $t=25$ in the right figure), a smaller number $(n-u)$ of non-stuck cells can be affected by errors, so it is very likely that many of the errors happen at partially stuck positions which affects the amount of storable information less than an error in a non-stuck position.

\subsection{Gilbert--Varshamov Bound on PSMCs}
 
In this section, we derive a sufficient condition for the existence of a code with certain parameters that can mask partially stuck cells and correct errors.
To derive this bound, we rely on our previous construction of PSMCs in \cite{haideralkim2019psmc}, which for $u<q$ masked cells solely required the existence of an error-correcting code with minimum distance $\geq 2t+1$, which contains the all-one vector.
In the proof, we therefore prove the existence of a code which contains the all-one vector as codeword.

\begin{thmmystyle} [Gilbert-Varshamov-like bound]
	\label{th11} 
	Let the positive integers $n$, $k \leq n$, $d \leq n$, $q$ fulfill:
	\begin{equation}\label{eq:GV-like-bound}
	\sum_{i=0}^{d-2} \binom{n -1}{i} (q-1)^i
	< q^{n-k}.
	\end{equation}
	Then, there exists an $[n',k',d]_q$ code that contains the all-one vector, where $n^\prime$, and $k^\prime$ satisfy: 
    \begin{equation*}
n-d+2 \leq n' \leq n+1,
\qquad
	 k-d+2 \leq k^\prime \leq k+1.
	\end{equation*}
	The parity-check matrix of this $[n',k',d]_q$ code can be constructed as shown in the proof.
\end{thmmystyle}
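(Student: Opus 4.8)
The plan is to reduce the existence claim to a greedy, Varshamov‑type construction of a parity‑check matrix and then to graft the all‑one vector onto the resulting code. Throughout, write $r := n-k$ and recall the two standard dictionary entries that will drive the argument: a $q$‑ary code with parity‑check matrix $H \in \mathbb{F}_q^{r \times N}$ has minimum distance $\geq d$ exactly when every $d-1$ columns of $H$ are linearly independent, and it contains the all‑one vector $\mathbf{1}_N$ exactly when the columns of $H$ sum to $\mathbf{0}$ (equivalently, every row of $H$ has zero coordinate‑sum). The first reformulation turns the distance requirement into a \emph{general position} condition on the columns, and the second turns the all‑one requirement into a single linear condition on them; these are the two pillars of the proof.

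First I would run the classical greedy construction over $\mathbb{F}_q^{r}$. Starting from an arbitrary nonzero column, I add columns one at a time, each time selecting the new column to avoid the linear span of every subset of at most $d-2$ previously chosen columns. When $j-1$ columns have been placed, the number of forbidden vectors is at most $\sum_{i=0}^{d-2}\binom{j-1}{i}(q-1)^i$, which for $j \leq n$ is bounded by the left‑hand side of \eqref{eq:GV-like-bound}; since this quantity is strictly less than $q^{r}$, a valid fresh column always remains. This produces an $r \times n$ matrix in general position, i.e. an $[n,k,\geq d]_q$ code, which is precisely the classical Gilbert–Varshamov guarantee and explains why the hypothesis is phrased with $\binom{n-1}{i}$.

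Next I would enforce the all‑one vector through the column‑sum condition. Having built general‑position columns $\mathbf{h}_1,\dots,\mathbf{h}_{n}$, I append the single extra column $\mathbf{h}_{n+1} := -\sum_{i=1}^{n}\mathbf{h}_i$; the resulting $r \times (n+1)$ matrix has zero column‑sum, so the associated code has length $n+1$, redundancy $r$, dimension $k+1$, and contains $\mathbf{1}_{n+1}$. This realizes the upper endpoints $n' = n+1$, $k' = k+1$ of the claimed intervals and explains why the redundancy $n'-k' = r$ is held fixed across the whole range.

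The main obstacle, and the reason the statement only asserts parameters \emph{within an interval}, is that the forced column $\mathbf{h}_{n+1}$ need not be in general position with the others, so appending it may create codewords of weight below $d$. I would show that any such short codeword must be supported on the new coordinate, and that after eliminating $\mathbf{h}_{n+1}$ it corresponds to a relation in which the running total $\sum_i \mathbf{h}_i$ lands inside the span of at most $d-2$ of the $\mathbf{h}_i$. The natural remedy is to steer against this during the final steps of the greedy: one chooses the last columns so as to simultaneously keep general position and push the total column‑sum out of the span of every set of at most $d-2$ chosen columns. A union bound shows that room for such a choice persists provided one is willing to shorten the block by up to $d-2$ columns — each such concession lowers the length and the dimension by one, down to at worst $n' = n-d+2$ and $k' = k-d+2$, which is exactly the freedom granted by the lower endpoints of the stated ranges. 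The delicate part is the bookkeeping: verifying that every shortening step keeps the column‑sum zero so that $\mathbf{1}$ survives, while the remaining columns regain full general position so that the distance returns to $\geq d$. It is precisely the matrix emerging from this steered‑and‑shortened procedure that is the parity‑check matrix referenced in the statement.
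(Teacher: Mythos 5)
Your first two steps coincide exactly with the paper's proof: run the Varshamov greedy construction to obtain columns $\ve{h}_1,\dots,\ve{h}_n$ in general position (any $d-1$ of them linearly independent), then append the forced column $\ve{p}=-\sum_{i=1}^{n}\ve{h}_i$ so that every row sums to zero and $\mathbf{1}$ becomes a codeword; this realizes $n'=n+1$, $k'=k+1$ whenever $\ve{p}$ happens to remain in general position. The gap is in how you treat the bad case, where $\ve{p}$ lies in the span of some $\delta\le d-2$ of the $\ve{h}_i$. Your remedy --- steer the last greedy choices so that the new column \emph{and} the running sum both avoid every span of $\le d-2$ chosen columns, justified by ``a union bound'' plus a willingness to shorten by up to $d-2$ columns --- is asserted, not proved, and it does not follow from hypothesis \eqref{eq:GV-like-bound}. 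Keeping the running sum $\ve{s}_j=\ve{h}_1+\dots+\ve{h}_j$ outside all small spans forbids, at step $j$, not only the usual $\sum_{i=0}^{d-2}\binom{j-1}{i}(q-1)^i$ vectors but additionally the translated union of spans $-\ve{s}_{j-1}+\bigcup_{|S|\le d-2}\langle \ve{h}_i : i\in S\rangle$ and, for each nonzero scalar $c$, the sets $c^{-1}\ve{s}_{j-1}+\bigcup_{|S'|\le d-3}\langle \ve{h}_i : i\in S'\rangle$; in total roughly two to three times the Gilbert--Varshamov sum, whereas \eqref{eq:GV-like-bound} bounds that sum by $q^{n-k}$ only once, with no slack. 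Shortening does not repair the count (the binomial coefficients barely shrink when $j$ drops by $d-2$), and it introduces a circularity you never resolve: deleting columns destroys the zero-column-sum property, so $\mathbf{1}$ is no longer a codeword until you append a fresh forced column, which may again be dependent --- no termination argument is given. The ``delicate bookkeeping'' you defer at the end is precisely the missing proof; indeed the $d-2$ in your shortening budget appears to be read off from the statement rather than derived.

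The paper closes the bad case with a purely algebraic substitution that requires no second counting argument, and this is how the interval in the statement actually arises. Write $\ve{p}=\sum_{j=1}^{\delta}a_j\ve{h}_{i_j}$ with all $a_j\neq 0$ and $\delta\le d-2$, and substitute this into the zero-sum relation $\sum_{i=1}^{n}\ve{h}_i+\ve{p}=\ve{0}$. This eliminates $\ve{p}$ and yields a relation in which each $\ve{h}_{i_j}$ carries the coefficient $1+a_j \bmod q$ and every other column carries the coefficient $1$. Now discard $\ve{p}$ together with the at most $\delta$ columns whose net coefficient vanishes, and rescale each remaining dependent column by its nonzero coefficient. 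Multiplying distinct columns by nonzero scalars preserves the property that any $d-1$ columns are linearly independent, and the surviving matrix has zero row sums by construction; hence it is a parity-check matrix of an $[n',k',d]_q$ code containing $\mathbf{1}$, with $n'\ge n-\delta\ge n-d+2$ and $k'=n'-(n-k)\ge k-d+2$. Replacing your steering-plus-shortening paragraph by this substitution argument turns your outline into a correct proof.
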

\begin{proof}
	Similar to the proof of the standard Gilbert--Varshamov bound, we construct a systematic parity-check matrix by adding columns $\ve{h}_l$ for $l=k+1,k+2,...$ to a $k \times k$ identity matrix as long as:
			\begin{equation} \label{eq16}
			 \sum_{i=0}^{d-2} \binom{l-1}{i} \cdot (q-1)^{i} < q^{n-k}.
			 \end{equation}
	Recall from the proof of the Gilbert-Varshamov bound that this condition ensures that there exists a column $\vec{h}_l$ that is linearly independent of any collection of $d-2$ other columns.
	
	If~\eqref{eq16} is not fulfilled anymore for $l=n+1$, we append an additional parity-check column $\ve{p}$ to the previous $n$ columns such that the sum of each row is zero (i.e., the weight is even in the binary case). This matrix is therefore:	
			\[
			\ve{H}_{e} := 
			\left[\begin{array}{@{}c|c}
			\begin{matrix}
			 & \Big{(} \underbrace{\ve{h}_1, \dots, \ve{h}_{n}}_{n} \Big{)}\\ 
			\end{matrix}
			& \ve{p}
			\end{array}\right],
			\]
where
	\begin{equation}\label{eq17}
		\sum_{i=1}^{n} \ve{h}_{i}+\ve{p}  = \ve{0}.
	\end{equation}  
However, for $\ve{H}_{e}$, we cannot guarantee anymore that any $d-1$ columns are linearly independent (as $\ve{p}$ might be linearly dependent on a small number of $\ve{h}_i$'s.). Therefore, in the following, we possibly remove a few columns from $\ve{H}_{e}$ to recover this property while still having zero row sums.

If $\ve{p}$ is linearly independent of any $d-1$ columns in $\vec{h}_1, \dots, \vec{h}_n$, we define $\ve{H}:=\ve{H}_{e}$.

	Else $\ve{p}$ is linearly dependent of $\delta \leq d-2$ columns $\{\vec{h}_{i_1},\vec{h}_{i_2},\dots, \vec{h}_{i_{\delta}}\} \subseteq \{\vec{h}_1,\vec{h}_2,\dots, \vec{h}_{n}\}$, and $\ve{p}$ is a linear combination of these $\delta$ columns:
	\begin{equation*}
		\ve{p} = \sum_{j=1}^{\delta} \ve{h}_{i_j} \cdot a_{j}, \text{ where } a_{j} \in \{1,2,\dots, q-1\}.
	\end{equation*}
	Thus with $1 \leq \delta_1 \leq\dots\leq \delta_{q-1}\leq \delta$ (by assuming w.l.o.g. an ordering on the indices),	
	\begin{equation}\label{eq18}
	\ve{p} = \sum_{j=1}^{\delta_1} \ve{h}_{i_j} + 2\cdot \sum_{j=\delta_1+1}^{\delta_2} \ve{h}_{i_j} +\dots+ (q-1)\cdot \sum_{j=\delta_{q-1}+1}^{\delta} \ve{h}_{i_j}.
	\end{equation}
	
	We can rewrite \eqref{eq17} as:
		\begin{equation*}
	\hspace{-3ex}\sum_{i=1 \setminus \{i_1,\dots,i_{\delta}\} }^{n} \hspace{-3ex}\ve{h}_{i}+
	\sum_{j=1}^{\delta_1} \ve{h}_{i_j} + \sum_{j=\delta_1+1}^{\delta_2} \ve{h}_{i_j} +\dots+ \sum_{j=\delta_{q-1}+1}^{\delta} \ve{h}_{i_j}
	+ \ve{p}  = \ve{0}.
	\end{equation*}

Combining this with~\eqref{eq18} yields:
     \begin{align*}
&\hspace{-2.5ex}\sum_{i=1 \setminus \{i_1,\dots,i_{\delta}\} }^{n} \hspace{-3ex}\ve{h}_{i}+
\sum_{j=1}^{\delta_1} \ve{h}_{i_j} + \sum_{j=\delta_1+1}^{\delta_2} \ve{h}_{i_j} +\dots+ \sum_{j=\delta_{q-1}+1}^{\delta} \ve{h}_{i_j}
+ \ve{p}  \\
&\hspace{-1.5ex}+\sum_{j=1}^{\delta_1} \ve{h}_{i_j} + 2\cdot \sum_{j=\delta_1+1}^{\delta_2} \ve{h}_{i_j} +\dots+ (q-1)\cdot \sum_{j=\delta_{q-1}+1}^{\delta} \ve{h}_{i_j} - \ve{p}.\\
&=\ve{0}.
     \end{align*}
Therefore,
\begin{align}
&\hspace{-2ex}\sum_{i=1 \setminus \{i_1,\dots,i_{\delta}\} }^{n} \hspace{-3ex}\ve{h}_{i}+
(2 \mod q)\sum_{j=1}^{\delta_1} \ve{h}_{i_j} + (3 \mod q)\sum_{j=\delta_1+1}^{\delta_2} \ve{h}_{i_j}\nonumber\\ &+\dots+ (q-1 \mod q)\sum_{j=\delta_{q-2}+1}^{\delta_{q-1}} \ve{h}_{i_j}=\ve{0}.\label{eq:GV-stillevenweight}
\end{align}

Therefore, the matrix
\begin{align*}
&\vec{H}:=\\
&\hspace{-2ex}\left(\begin{smallmatrix}
 \underbrace{\ve{h}'_1 ,\dots ,\ve{h}'_{n-\delta}}_{n-\delta}
&  \underbrace{\Big | 2 \vec{h}_{i_1}, ..., 2 \vec{h}_{i_{\delta_1}}}_{\delta_1}
& \Big | \dots 
&  \underbrace{\Big | -\vec{h}_{i_{\delta_{q-2}+1}}, ..., - \vec{h}_{i_{\delta_{q-1}}}}_{\delta_{q-1}-\delta_{q-2}}
\end{smallmatrix}\right)
\end{align*}
where $\ve{h}'_1, \dots, \ve{h}'_{n-\delta} = \{\ve{h}_1,\dots,\ve{h}_n \} \setminus \{\ve{h}_{i_1},\dots,\ve{h}_{i_\delta} \}$, has sum equal to zero in all rows due to~\eqref{eq:GV-stillevenweight} and any $d-1$ columns are linearly independent since they are all columns (times a non-zero scalar) of the matrix $(\ve{h}_1,\dots,\ve{h}_n)$.

The number of columns $n'$ of $\ve{H}$ is bounded by
\begin{equation*}
n-d+2 \leq n-\delta \leq n' \leq n+1,
\end{equation*}
where $n'=n+1$ if $\ve{p}$ was linearly independent of any $d-2$ other columns and therefore no columns have to be removed.

Substituting $n$ in $l$ of~\eqref{eq16}, we obtain:

	\begin{equation}\label{eq19}
	\sum_{i=0}^{d-2} \binom{n -1}{i} (q-1)^i
	< q^{n-k}.
	\end{equation}
Since $n \leq n'+d-2$ and since $n-k=n'-k'$ (the number of rows did not change), we get 
	\begin{equation*}
\sum_{i=0}^{d-2} \binom{n' +d-3}{i} (q-1)^i
< q^{n'-k'}.
\end{equation*}
Thus, if this is true, there exists an $[n',k',d]_q$ code that contains the all-one vector, where $k'=n'-(n'-k')=n'-(n-k)=n'-n+k \geq n-d+2-n+k = k-d+2$.
\end{proof}
\begin{figure*} [h]
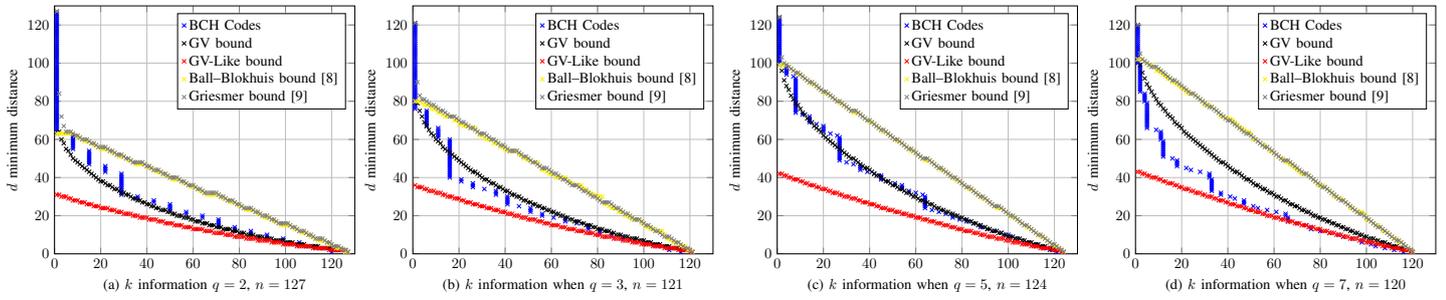

	\scalebox{0.55}{
		
		\begin{tikzpicture}
			\pgfplotsset{compat = 1.3}
			\begin{axis}[
				width = \columnwidth,
				xlabel = {(a) $k $ information $q =2$, $n=127$},
				ylabel = {$d$ minimum distance},
				xmin = 0,
				xmax = 130.0,
				ymin = 0,
				ymax = 130.0,
				grid=major,
				legend pos = north east,
				legend cell align=left]
				\def\mymark{x}
				\input{figures/figuresq2_fig_added/plot_different_k_for_BCH_Codes}
				\addlegendentry{BCH Codes};

				\input{figures/figuresq2_fig_added/plot_list_gilbert_varshamov_bound}
				\addlegendentry{GV bound};
				\input{figures/figuresq2_fig_added/plot_list_gilbert_varshamov_bound_fig}
				\addlegendentry{ GV-Like bound};
				\input{figures/figuresq2_fig_added/plot_list_ball_blokhuis_bound}
				\addlegendentry{Ball--Blokhuis bound \cite{ball2013bound}};
				\input{figures/figuresq2_fig_added/plot_list_griesmer_bound}
				\addlegendentry{Griesmer bound \cite{griesmer1960bound}};
			\end{axis}
		\end{tikzpicture}
		\begin{tikzpicture}
			\pgfplotsset{compat = 1.3}
			\begin{axis}[
				width = \columnwidth,
				xlabel = {(b) $k $ information when $q =3$, $n=121$},
				ylabel = {$d$ minimum distance},
				xmin = 0,
				xmax = 130.0,
				ymin = 0,
				ymax = 130.0,
				grid=major,
				legend pos = north east,
				legend cell align=left]
				\def\mymark{x}
				\input{figures/figuresq3_fig_added/plot_different_k_for_BCH_Codes}
				\addlegendentry{BCH Codes}
				\input{figures/figuresq3_fig_added/plot_list_gilbert_varshamov_bound}
				\addlegendentry{GV bound};
				\input{figures/figuresq3_fig_added/plot_list_gilbert_varshamov_bound_fig}
				\addlegendentry{GV-Like bound};
				\input{figures/figuresq3_fig_added/plot_list_ball_blokhuis_bound}
				\addlegendentry{Ball--Blokhuis bound \cite{ball2013bound}};
				\input{figures/figuresq3_fig_added/plot_list_griesmer_bound}
				\addlegendentry{Griesmer bound \cite{griesmer1960bound}};
			\end{axis}
	\end{tikzpicture}}
	\scalebox{0.55}{
		\begin{tikzpicture}
			\pgfplotsset{compat = 1.3}
			\begin{axis}[
				width = \columnwidth,
				xlabel = {(c) $k $ information when $q =5$, $n=124$},
				ylabel = {$d$ minimum distance},
				xmin = 0,
				xmax = 130.0,
				ymin = 0,
				ymax = 130.0,
				grid=major,
				legend pos = north east,
				legend cell align=left]
				\def\mymark{x}
				\input{figures/figuresq5_fig_added/plot_different_k_for_BCH_Codes}
				\addlegendentry{BCH Codes};

				\input{figures/figuresq5_fig_added/plot_list_gilbert_varshamov_bound}
				\addlegendentry{GV bound};
				\input{figures/figuresq5_fig_added/plot_list_gilbert_varshamov_bound_fig}
				\addlegendentry{GV-Like bound};
				\input{figures/figuresq5_fig_added/plot_list_ball_blokhuis_bound}
				\addlegendentry{Ball--Blokhuis bound \cite{ball2013bound}};
				\input{figures/figuresq5_fig_added/plot_list_griesmer_bound}
				\addlegendentry{Griesmer bound \cite{griesmer1960bound}};
			\end{axis}
		\end{tikzpicture}
		
		\begin{tikzpicture}
			\pgfplotsset{compat = 1.3}
			\begin{axis}[
				width = \columnwidth,
				xlabel = {(d) $k $ information when $q =7$, $n=120$},
				ylabel = {$d$ minimum distance},
				xmin = 0,
				xmax = 130.0,
				ymin = 0,
				ymax = 130.0,
				grid=major,
				legend pos = north east,
				legend cell align=left]
				
				\def\mymark{x}
				\input{figures/figuresq7_fig_added/plot_different_k_for_BCH_Codes}
				\addlegendentry{BCH Codes};

				\input{figures/figuresq7_fig_added/plot_list_gilbert_varshamov_bound}
				\addlegendentry{GV bound};
				\input{figures/figuresq7_fig_added/plot_list_gilbert_varshamov_bound_fig}
				\addlegendentry{GV-Like bound};
				\input{figures/figuresq7_fig_added/plot_list_ball_blokhuis_bound}
				\addlegendentry{Ball--Blokhuis bound \cite{ball2013bound}};
				\input{figures/figuresq7_fig_added/plot_list_griesmer_bound}
				\addlegendentry{Griesmer bound \cite{griesmer1960bound}};
			\end{axis}
	\end{tikzpicture}}
	
	\caption{Comparisons of bounds of our new GV-Like-Bound for different $n$ and $q$. For BCH codes, we use our previous construction (for $u<q$)\cite[Theorem~1]{haideralkim2019psmc} since BCH codes have the all-one vector in their generator matrices.} 
	\label{fig3}
\end{figure*}
\begin{corollary}
	Let $u<q$ and let~\eqref{eq:GV-like-bound} hold, i.e., such that an $[n',k',d]_q$ code that contains the all-one vector as codeword exists.
	Then, there is a  ($u$,$\lfloor\tfrac{d-1}{2}\rfloor)$-PSMC of length $n'$.
\end{corollary}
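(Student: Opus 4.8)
The plan is to combine Theorem~\ref{th11} with the PSMC construction from \cite{haideralkim2019psmc}. Recall, as noted in the discussion preceding Theorem~\ref{th11}, that the construction in \cite[Theorem~1]{haideralkim2019psmc} produces a $(u,t)$-PSMC for any $u<q$ whenever one is given an error-correcting code of minimum distance at least $2t+1$ that contains the all-one vector as a codeword. The corollary should therefore be an essentially immediate consequence: the hypothesis~\eqref{eq:GV-like-bound} guarantees, via Theorem~\ref{th11}, the existence of exactly such a code, and feeding it into that construction with the appropriate choice of $t$ yields the claimed PSMC of length $n'$.

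Concretely, I would first invoke Theorem~\ref{th11}: since~\eqref{eq:GV-like-bound} holds, there exists an $[n',k',d]_q$ code $\Code$ containing the all-one vector. Next I would set $t := \lfloor \tfrac{d-1}{2} \rfloor$ and check the distance requirement of the prior construction, namely $d \geq 2t+1$. This is a one-line verification: for odd $d$ we have $2t+1 = d$, whereas for even $d$ we have $2t+1 = d-1$, so in either case $d \geq 2t+1$, which means $\Code$ can correct at least $t$ errors.

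Finally, I would apply \cite[Theorem~1]{haideralkim2019psmc} to $\Code$ with this value of $t$. Because $u<q$, and $\Code$ contains the all-one vector and has minimum distance $\geq 2t+1$, the construction yields a $(u,t) = (u,\lfloor \tfrac{d-1}{2}\rfloor)$-PSMC whose length equals the length $n'$ of $\Code$, which establishes the statement. There is no genuinely hard step here; the only points requiring any care are the elementary inequality $d \geq 2\lfloor \tfrac{d-1}{2}\rfloor + 1$ and the bookkeeping that the PSMC inherits its length unchanged from $\Code$, so that the resulting length is precisely $n'$ as claimed.
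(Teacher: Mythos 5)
Your proposal is correct and follows essentially the same route as the paper: the paper's proof is a one-line invocation of \cite[Theorem~1]{haideralkim2019psmc}, which states that for $u<q$ a code of minimum distance $d$ containing the all-one vector yields a ($u$,$\lfloor\tfrac{d-1}{2}\rfloor$)-PSMC, applied to the code whose existence Theorem~\ref{th11} guarantees. Your additional verification that $d \geq 2\lfloor\tfrac{d-1}{2}\rfloor+1$ is a harmless elaboration of what the paper leaves implicit.
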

\begin{proof}
	In \cite[Theorem~1]{haideralkim2019psmc}, it was shown that if the all-one vector is a codeword of a code with minimum distance $d$, then for $u <q$, there is a ($u$,$\lfloor\tfrac{d-1}{2}\rfloor)$-PSMC.
\end{proof}
Figure~\ref{fig3} compares the new GV-like bound to other well-known bounds. 
To explain how, let us first define the following:
\begin{itemize}
	\item Let $n_f$ and $k_f$ denote the $n$ and $k$ that is chosen to compute one point in the figure (here $n_f = 127, 121, 124,$ and $120$ and $k_f =$ values on the x-axis).
	\item Let $n$ and $k$ be the \emph{designed} length and dimension in the GV bound, i.e., what is used in the inequality \eqref{eq:GV-like-bound}.
	\item Let $n'$ and $k'$ be the actual length and dimension of the existing code.
\end{itemize} 
In the plots, we want to know for a fixed pair $[n_f,k_f]$, what the maximal minimum distance $d$ is that guarantees existence of a code. To have a fair comparison to another lower bound or construction, we chose the \emph{designed} $n$ and $k$ to our largest disadvantage, i.e.
$(n_f = n+1)$ which gives $(n = n_f-1)$ and 
$(k_f = k-d+2)$ which gives $(k = k_f+d-2)$. Next, we have to obtain $d$, so 
we plugged for $d=1,2,\dots$ his $n$ and $k$ (depending on $d$) into the bound in \eqref{eq:GV-like-bound} and computed the maximal $d$ that still satisfies the bound. Then we get a $d$ and Theorem~\ref{th11} tells us that there is an $[n',k',d]_q$ code with
$(n-d+2 \leq  n' \leq n+1 = n_f)$ and $(k_f = k-d+2 \leq k'\leq k+1)$.
In particular, the actual code has length $n'\leq n_f$ and dimension $k'\geq k_f$.
Hence, a code with exact parameters $[n_f, k_f, \geq d]_q$ exists. However, the values obtained for this $d$ are quite bad (the red curve).
For high dimensions $k_f$ and larger $q$, these values tend to be better as shown in (d), i.e. the red curve matches, and then is above the BCH curve.

\section{Conclusion}

We have proposed a new construction for combined masking of partially stuck-at-1 cells and error correction, by masking only binary classical stuck memory cells as proposed in \cite[Thorem~9]{wachterzeh2016codes}, with the error correction possibility similar to \cite[Theorem~4]{haideralkim2019psmc}.
Compared to \cite[Thorem~9]{wachterzeh2016codes}, the new code construction can correct errors in addition to masking. Furthermore, for specific examples on the code parameters, a higher amount of information symbols $k_1+l$ when $u \leq q \leq n$ can be stored compared to the code construction for masking and error correction in \cite[Theorem~4]{haideralkim2019psmc}.
Further, we have derived bounds on the required redundancy for a given number of partially stuck cells to mask and a given number of errors to correct. This includes a sphere-packing and a Gilbert--Varshamov-like bound.

Future work should calculate the capacity of a storage channel in which $u$ cells can be partially stuck at levels $s$ with probability $p$ and the rest $(u-n)$ healthy cells with probability $(1-p)$. If we assume disjoint case, $(u-n)$ cells have also crossover probability $\varepsilon$, i.e errors occur in $(u-n)$ cells only. 
Then we should compare it to the code rate of the new construction similar to \cite[Section \rom{9}]{wachterzeh2016codes}. 
\section*{Acknowledgement}
We would like to thank Ludo Tolhuizen for making us aware of a more accurate curve regarding GV-like bound by commenting on the previous version of this paper.
\bibliographystyle{IEEEtran}

\section {Appendix}
\emph{Example~1. Binary Codes for Masking and Correcting Partially Defect Memory}

\small{ Let $\ve{m} = \left(\begin{array}{rrrrrr}
	1 & 0 & 1 & \alpha &   1 +\alpha& 1
	\end{array}\right) \in \mathbb{F}_{2^2}^6$, $\ve{m}^\prime = \left(\begin{array}{rrrr}
	\alpha & 0 & \alpha & 0
	\end{array}\right) \in \mathcal{F}^{4}\subseteq \F_{2^2}$. Let $u=4$ so that $u_0=2 $. Since we mask $u$-PSMC by the mean of $u_0$-SMC, we need a code $\mathcal{C}_0$ of a minimum distance $d_0 = u_0 +1 = 3$ that its dual code is generated by $\ve{H}_0$. Let the $u$ stuck positions be $\phi_{0}=1, \phi_{1}=2, \phi_{2}=9$ and $\phi_{3}=14$.}
\small{Let $\ve{G}$ over $\mathbb{F}_4$ be a generator matrix of a code $\mathcal{C}$ with parameters $[15,11,3]_4$ from Theorem~\ref{thm9}:}
\\
\\
\scalebox{0.60}{ 
$ \ve{G} = \begin{bmatrix} &\textcolor{red}{\ve{H}_0}&& \textcolor{blue}0 \\&\ve{G}_1&&\textcolor{blue}{\vdots} \\&&&\textcolor{blue}0 \\ \textcolor{orange}1&\textcolor{orange}{\dots}&&\textcolor{orange}1   \end{bmatrix} =\\
\left(\begin{array}{rrrrrrrrrrrrrrr}
	\textcolor{red}1 & \textcolor{red}0 & \textcolor{red}0 & \textcolor{red}0 & \textcolor{red}1 & \textcolor{red}0 & \textcolor{red}1 & \textcolor{red}0 & \textcolor{red}1 & \textcolor{red}1 & \textcolor{red}1 & \textcolor{red}1 & \textcolor{red}0 & \textcolor{red}0 & \textcolor{blue}0 \\
	\textcolor{red}0 & \textcolor{red}1 & \textcolor{red}0 & \textcolor{red}0 & \textcolor{red}1 & \textcolor{red}1 & \textcolor{red}0 & \textcolor{red}1 & \textcolor{red}0 & \textcolor{red}1 & \textcolor{red}1 & \textcolor{red}1 & \textcolor{red}1 & \textcolor{red}0 & \textcolor{blue}0 \\
	\textcolor{red}0 & \textcolor{red}0 & \textcolor{red}1 & \textcolor{red}0 & \textcolor{red}0 & \textcolor{red}1 & \textcolor{red}1 & \textcolor{red}0 & \textcolor{red}1 & \textcolor{red}0 & \textcolor{red}1 & \textcolor{red}1 & \textcolor{red}1 & \textcolor{red}1 & \textcolor{blue}0 \\
	\textcolor{red}0 & \textcolor{red}0 & \textcolor{red}0 & \textcolor{red}1 & \textcolor{red}0 & \textcolor{red}0 & \textcolor{red}1 & \textcolor{red}1 & \textcolor{red}0 & \textcolor{red}1 & \textcolor{red}0 & \textcolor{red}1 & \textcolor{red}1 & \textcolor{red}1 & \textcolor{blue}0 \\
    0 & 0 & 0 & 0 & 1 & 0 & 0 & 0 & 0 & 0 & 0 & 1 & 0 & 1 & \textcolor{blue}0 \\
    0 & 0 & 0 & 0 & 0 & 1 & 0 & 0 & 0 & 0 & 1 & 1 & 1 & 0 & \textcolor{blue}0 \\
    0 & 0 & 0 & 0 & 0 & 0 & 1 & 0 & 0 & 0 & 0 & 1 & 1 & 1 & \textcolor{blue}0 \\
    0 & 0 & 0 & 0 & 0 & 0 & 0 & 1 & 0 & 0 & 1 & 1 & 1 & 1 & \textcolor{blue}0 \\
    0 & 0 & 0 & 0 & 0 & 0 & 0 & 0 & 1 & 0 & 1 & 0 & 1 & 1 & \textcolor{blue}0 \\
    0 & 0 & 0 & 0 & 0 & 0 & 0 & 0 & 0 & 1 & 1 & 0 & 0 & 1 & \textcolor{blue}0 \\
    
	\textcolor{orange}1 & \textcolor{orange}1 & \textcolor{orange}1 & \textcolor{orange}1 & \textcolor{orange}1 & \textcolor{orange}1 & \textcolor{orange}1 & \textcolor{orange}1 & \textcolor{orange}1 & \textcolor{orange}1 & \textcolor{orange}1 & \textcolor{orange}1 & \textcolor{orange}1 & \textcolor{orange}1 & \textcolor{orange}1
\end{array}\right)
$}
\\
\\
\small{ \emph{Encoding} follows Algorithm~\ref{a13} by plugging in the given values and matrices.\\
\small{ $\ve{G}^\prime$ over $\mathbb{F}_4$ of the code $\mathcal{C}$ of the parameters $[15,11,3]_4$ in \cite[Theroem~4]{haideralkim2019psmc} is:}

\scalebox{0.60}{
	$ \ve{G}^\prime = \begin{bmatrix} \ve{0}_{k_1 \times l} & \ve{I}_{k_1} &\ve{P}_{k_1\times r} \\ &\textcolor{red}{\ve{H}_0 }  &   \end{bmatrix} =
\left(\begin{array}{rrrrrrrrrrrrrrr}
	0 & 0 & 0 & 0 & 1 & 0 & 0 & 0 & 0 & 0 & 0 & 1 & 0 & 1 & 0 \\
	0 & 0 & 0 & 0 & 0 & 1 & 0 & 0 & 0 & 0 & 0 & 0 & 1 & 0 & 1 \\
	0 & 0 & 0 & 0 & 0 & 0 & 1 & 0 & 0 & 0 & 0 & 1 & 1 & 1 & 0 \\
	0 & 0 & 0 & 0 & 0 & 0 & 0 & 1 & 0 & 0 & 0 & 0 & 1 & 1 & 1 \\
	0 & 0 & 0 & 0 & 0 & 0 & 0 & 0 & 1 & 0 & 0 & 1 & 1 & 1 & 1 \\
	0 & 0 & 0 & 0 & 0 & 0 & 0 & 0 & 0 & 1 & 0 & 1 & 0 & 1 & 1 \\
	0 & 0 & 0 & 0 & 0 & 0 & 0 & 0 & 0 & 0 & 1 & 1 & 0 & 0 & 1 \\
	\textcolor{red}1 & \textcolor{red}0 & \textcolor{red}0 & \textcolor{red}0 & \textcolor{red}0 & \textcolor{red}0 & \textcolor{red}1 & \textcolor{red}1 & \textcolor{red}0 & \textcolor{red}1 & \textcolor{red}0 & \textcolor{red}1 & \textcolor{red}1 & \textcolor{red}1 & \textcolor{red}1 \\
	\textcolor{red}0 & \textcolor{red}1 & \textcolor{red}0 & \textcolor{red}0 & \textcolor{red}1 & \textcolor{red}1 & \textcolor{red}0 & \textcolor{red}1 & \textcolor{red}0 & \textcolor{red}1 & \textcolor{red}1 & \textcolor{red}1 & \textcolor{red}1 & \textcolor{red}0 & \textcolor{red}0 \\
	\textcolor{red}0 & \textcolor{red}0 & \textcolor{red}1 & \textcolor{red}0 & \textcolor{red}0 & \textcolor{red}1 & \textcolor{red}1 & \textcolor{red}0 & \textcolor{red}1 & \textcolor{red}0 & \textcolor{red}1 & \textcolor{red}1 & \textcolor{red}1 & \textcolor{red}1 & \textcolor{red}0 \\
	\textcolor{red}0 & \textcolor{red}0 & \textcolor{red}0 & \textcolor{red}1 & \textcolor{red}1 & \textcolor{red}0 & \textcolor{red}0 & \textcolor{red}1 & \textcolor{red}1 & \textcolor{red}0 & \textcolor{red}1 & \textcolor{red}0 & \textcolor{red}1 & \textcolor{red}1 & \textcolor{red}1
\end{array}\right)
$}\Large{.}
\\
\\
\small{It is good to mention that if we take the \emph{reduced echelon form} for both $\ve{G}$ and $\ve{G}^\prime$, the result is the same matrix $\ve{G}_e$. However, applying Theorem~\ref{thm9} gives higher rate compare to \cite[Theorem~4]{haideralkim2019psmc}.}

\end{document}